\DeclareMathOperator\arctanh{arctanh}
\newtheorem{theorem}{Theorem}[section]
\newtheorem{lemma}{Lemma}[section]
\newtheorem{proposition}{Proposition}[section]
\newtheorem{assumption}{Assumption}[section]
\newtheorem{remark}{Remark}[section]
\numberwithin{figure}{section}
\begin{document}

\title[VIX options in the SABR model]
{VIX options in the SABR model}

\author{Dan Pirjol}
\address{Stevens Institute of Technology, Hoboken, New Jersey, United States of America}

\author{Lingjiong Zhu}
\address{Florida State University, Tallahassee, Florida, United States of America}

\date{July 15, 2025}

\keywords{VIX options, SABR model, explosion, short-maturity asymptotics, local volatility models}

\begin{abstract}
We study the pricing of VIX options in the SABR model $dS_t = \sigma_t S_t^\beta dB_t, d\sigma_t = \omega \sigma_t dZ_t$ where $B_t,Z_t$ are standard Brownian motions correlated with correlation $\rho<0$ and
 $0 \leq \beta < 1$. VIX is expressed as a risk-neutral conditional expectation of an integral over the volatility process $v_t = S_t^{\beta-1} \sigma_t$. We show that $v_t$ is  the unique solution to a one-dimensional diffusion process. Using the Feller test, we show that $v_t$ explodes in finite time with non-zero probability. As a consequence, VIX futures and VIX call prices are infinite, and VIX put prices are zero for any maturity. As a remedy, we propose a capped volatility process by capping the drift and diffusion terms in the $v_{t}$ process such that it becomes non-explosive and well-behaved, 
and study the short-maturity asymptotics for the pricing of VIX options.
\end{abstract}

\maketitle

\baselineskip18pt

\section{Introduction}

The CBOE Volatility Index (VIX) is a measure of the S\&P 500 expected volatility, and is computed and published by the 
Chicago Board Options Exchange (CBOE). This index is defined by the risk-neutral expectation
\begin{equation}
\mathrm{VIX}_T^2 = \mathbb{E}\left[-\frac{2}{\tau} \log\left(\frac{S_{T+\tau}}{S_T}\right)+\frac{2}{\tau}\int_{T}^{T+\tau}\frac{dS_{t}}{S_{t}} \Big|\mathcal{F}_T \right],
\end{equation}
where $S_t$ is the equity index S\&P 500 at time $t$, and $\tau = 30$ days. This expectation is estimated from the prices of current (as of $T$)
call and put options on the SPX index, and
is estimated from the traded prices of options on the S\&P 500 index.
The methodology for VIX computation is detailed in the VIX White Paper \cite{VIXwp}.

CBOE lists futures and options on the $\mathrm{VIX}_T$ index with
several maturities.
VIX option contracts are cash settled with an amount linked to the $\mathrm{VIX}_T$ observed at the contract maturity $T$.
Denote $r,q$ the risk-free rate and the dividend yield, respectively.
Under a model for $S_{t}$ 
with continuous paths (no jumps) of the form
\begin{equation}\label{Sgeneric}
\frac{dS_t}{S_t} = v_t dB_t + (r-q) dt\,, 
\end{equation}
with $\{v_t\}_{t\geq 0}$ being the \emph{volatility process} that follows a stochastic process, $B_{t}$ a standard Brownian motion, 
the $\mathrm{VIX}_T$ index is given by the risk-neutral expectation
\begin{equation}\label{VIXdef}
\mathrm{VIX}_T^2 = \mathbb{E}\left[\frac{1}{\tau} \int_T^{T+\tau} v_t^2 dt \Big|\mathcal{F}_T\right].
\end{equation}

The price of a futures contract on the $\mathrm{VIX}$ index with maturity $T$ is given by the risk-neutral expectation
\begin{equation}\label{VIX:F}
F_V(T) = \mathbb{E}[\mathrm{VIX}_T] \,,
\end{equation}
and the prices of VIX calls and puts with strike $K$ and maturity $T$ are given by 
\begin{align}\label{VIX:call}
C_{V}(K,T)= e^{-rT } \mathbb{E}[(\mathrm{VIX}_{T}-K)^{+}]\,,\quad
P_{V}(K,T)= e^{-rT} \mathbb{E}[(K-\mathrm{VIX}_{T})^{+}] \,.
\end{align}

The VIX option pricing has been extensively studied in the literature.
Ref.~\cite{Detemple2000} studied volatility options pricing under several popular diffusion models for the variance process. 
\cite{Carr2005} derived results for volatility options under pure jump models with independent increments.
\cite{Sepp2008a,Sepp2008} studied volatility derivatives under a square root volatility model with jumps. 
\cite{Goard2013} derived
analytical results for VIX options in the 3/2 stochastic volatility model, 
which was extended to allow jumps in \cite{Baldeaux2014}. 
Recently, the martingale optimal transport approach was applied to the problem of the joint SPX/VIX smile calibration in \cite{Guyon2020,Guyon2024}. 
We mention also additional recent work on VIX option pricing presented in 
\cite{AbiJaber2022,Alos2022VIX,Forde2022,Cao2020,Tong2021,Yuan2022,Cuchiero2023}.

VIX option pricing was studied under local-stochastic volatility (LSV) models in \cite{Forde2023,VIXpaper}. This includes the log-normal ($\beta=1$) SABR model as a special case. 
Recall that the SABR model \cite{SABRpaper} is defined by
\begin{equation}\label{SABR}
dS_t  =  S_t^\beta \sigma_t dB_t + (r-q) S_t dt\,,\qquad
d\sigma_t = \omega \sigma_t dZ_t\,,
\end{equation}
where $B_t,Z_t$ are correlated standard Brownian motions 
with correlation $\rho\in[-1,1]$,  
$\omega>0$ is the volatility of volatility parameter, 
and $\beta\in[0,1]$ is an exponent which controls
the backbone of the implied volatility \cite{SABRpaper}.
For simplicity we take $r=q=0$.

In this paper we study the pricing of VIX options in the SABR model with $0\leq\beta<1$. 
The main motivation for this study is theoretical - the SABR model is a particular case of the LSV model studied in \cite{VIXpaper} although it does not satisfy the technical conditions of that paper. Thus it is of interest to see if the results of \cite{VIXpaper} extend also to this case. This turns out to be indeed the case, although we show that it exhibits also some surprisingly different  qualitative behaviors.

Another motivation is the widespread use of the model in financial practice. For this reason, many results are available: Hagan et al. \cite{SABRpaper} derived short maturity asymptotics for the asset price distribution, option prices and implied volatility at leading order in the SABR model. 
The expansion was extended to higher order in \cite{HLbook,Lewis1} and the convergence of the expansion was studied in \cite{Lewis2022}.
The martingale properties of the $\beta=1$ SABR model were studied in \cite{Lions2008}.
The simulation and option
pricing under the SABR model have also been studied extensively, see \cite{AKSbook,HLbook} for reviews.
An exact simulation method for the SABR model for $\beta=1$ and for $\beta<1$, $\rho=0$ was proposed in \cite{CSC2017}.
An efficient simulation method was proposed and studied in \cite{CKN2021} for SABR and more general stochastic local volatility models based on continuous-time Markov chain approximation.

The SABR model \eqref{SABR} is of the type (\ref{Sgeneric}) with volatility process 
$v_t=S_t^{\beta-1} \sigma_t$.
The SABR model \eqref{SABR} is also a special case of the local-stochastic volatility model studied in \cite{VIXpaper}
\begin{align}
\frac{dS_t}{S_t} = \eta(S_t) \sqrt{V_t} dB_t\,, \quad
\frac{dV_t}{V_t} = \mu(V_t) dt + \sigma(V_t) dZ_t\,,
\end{align}
where $(B_t,Z_t)$ are correlated standard Brownian motions, with correlation $\rho$. The SABR model (\ref{SABR}) is recovered by taking $V_t=\sigma_t^2$ and $\eta(x) = x^{\beta-1}$, $\mu(v)=\omega^2$, $\sigma(v)=2\omega$.
However, for $0\leq \beta < 1$, this choice of $\eta(\cdot)$ does not satisfy the technical assumptions
in \cite{VIXpaper} and hence the results of that paper are not directly applicable.


We will show in this paper that the SABR model with $0\leq\beta<1$ exhibits
qualitatively different behavior that leads
to explosion for $\mathrm{VIX}_{T}$. 
We propose a modification of the model based on capping the volatility process which ensures well-behaved VIX futures and option prices. 


\section{The Volatility Process in the SABR Model}

Consider the process for $\mathrm{VIX}_T$, defined as in (\ref{VIXdef}). 
In the SABR model \eqref{SABR}, we have
\begin{equation}\label{VIX:v:eqn}
\mathrm{VIX}_{T}^{2}=\mathbb{E}\left[\frac{1}{\tau}\int_{T}^{T+\tau}v_{t}^{2}dt\Big|\mathcal{F}_{T}\right],   
\end{equation}
where $v_{t}$ is the \emph{volatility process} defined as
\begin{equation}
v_{t}:=S_{t}^{\beta-1}\sigma_{t}, 
\end{equation}
where $S_{t},\sigma_{t}$ are given in \eqref{SABR}.
An application of It\^{o}'s lemma gives that $v_t$ follows 
a one-dimensional stochastic differential equation (SDE).
\footnote{This observation was also noted in \cite{Lewis2}, see Section 8.3.2 where $v_t$ is called the effective log-normal volatility process for the SABR model.}

\begin{lemma}\label{lem:existence}
In the SABR model \eqref{SABR}, $v_t=\sigma_t S_t^{\beta-1}$ follows the SDE:
\begin{eqnarray}\label{vsde}
&&\frac{dv_t}{v_t} =  \sigma_V(v_t) d W_t + \mu_V(v_t) dt\,,
\end{eqnarray}
where $W_t$ is a standard Brownian motion different from $B_t$,
and the volatility and drift of this process are given by 
\begin{eqnarray}
&& \sigma_V(v) := \sqrt{\omega^2 + (\beta-1)^2 v^2 + 2\rho (\beta-1) \omega v}\,,\label{sigma:v:eqn} \\
&& \mu_V(v) := v (\beta-1)\left[ \frac12 (\beta - 2) v + \rho \omega \right] \,.\label{mu:v:eqn}
\end{eqnarray}
\end{lemma}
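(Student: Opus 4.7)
My plan is a direct application of It\^{o}'s formula to $v_t = f(S_t, \sigma_t)$ with $f(s,\sigma) = \sigma s^{\beta-1}$, followed by a L\'{e}vy-characterization argument to collapse the two driving noises into a single Brownian motion.

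First I would compute the partial derivatives of $f$ (noting in particular $f_{\sigma\sigma} = 0$, which will kill one of the usual It\^{o} terms) together with the three quadratic variations arising from the SABR dynamics \eqref{SABR} with $r = q = 0$: $d\langle S\rangle_t = \sigma_t^2 S_t^{2\beta}\, dt$, $d\langle \sigma\rangle_t = \omega^2\sigma_t^2\, dt$, and $d\langle S,\sigma\rangle_t = \rho\omega\sigma_t^2 S_t^\beta\, dt$. Substituting into It\^{o}'s formula and repeatedly using the key simplification $\sigma_t^a S_t^{a(\beta-1)} = v_t^a$, every resulting term should collapse to a monomial in $v_t$ alone. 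I expect the martingale part to be $\omega v_t\, dZ_t + (\beta-1) v_t^2\, dB_t$ and the drift to be $(\beta-1)\rho\omega v_t^2 + \tfrac12(\beta-1)(\beta-2) v_t^3$; after dividing through by $v_t$ this matches $\mu_V(v_t)$ as defined in \eqref{mu:v:eqn}.

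Next, to express the two-noise martingale as a single Brownian integral, I would compute the quadratic variation of $(\beta-1) v_t\, dB_t + \omega\, dZ_t$ using $d\langle B,Z\rangle_t = \rho\, dt$; this yields exactly $\sigma_V(v_t)^2\, dt$ with $\sigma_V$ as in \eqref{sigma:v:eqn}. Setting
\[
W_t := \int_0^t \frac{(\beta-1) v_s\, dB_s + \omega\, dZ_s}{\sigma_V(v_s)},
\]
the process $W_t$ is then a continuous local martingale with $\langle W\rangle_t = t$, hence a standard Brownian motion by L\'{e}vy's characterization. That $W_t$ genuinely differs from $B_t$ follows from $d\langle W,B\rangle_t/dt = [(\beta-1) v_t + \rho\omega]/\sigma_V(v_t)$, which is not identically $1$ whenever $\beta<1$ and $|\rho|<1$.

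The main (and essentially only) subtlety is well-definedness of this stochastic integral, i.e.\ strict positivity of $\sigma_V$. Viewing $\sigma_V(v)^2$ as a quadratic in the variable $(\beta-1)v$, its discriminant equals $4\omega^2(\rho^2-1)$, which is negative whenever $|\rho|<1$; hence $\sigma_V(v)^2 > 0$ for every $v \in \mathbb{R}$, and the construction of $W_t$ is valid up to any possible explosion time of $v_t$ (the very phenomenon analyzed in the following section).
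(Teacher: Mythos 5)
Your proposal is correct and follows essentially the same route as the paper: a direct It\^{o} computation showing every term collapses to a monomial in $v_t$, followed by L\'{e}vy's characterization applied to $W_t := \int_0^t \sigma_V(v_s)^{-1}\left((\beta-1)v_s\,dB_s + \omega\,dZ_s\right)$. Your added discriminant check that $\sigma_V^2(v)>0$ for $|\rho|<1$ is a worthwhile explicit justification of the well-definedness of $W_t$, which the paper defers to its later discussion of Assumption~2.1.
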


\begin{proof}
We have by It\^{o}'s lemma
\begin{align}
dv_t &= d(\sigma_t S_t^{\beta-1} ) \\
&= S_t^{\beta-1} d\sigma_t + 
\sigma_t dS_t^{\beta-1} +  \rho \frac{\partial^2 (\sigma_t S_t^{\beta-1})}{\partial\sigma_t \partial S_t} dt\nonumber\\
&= S_t^{\beta-1} \omega \sigma_t dZ_t + \sigma_t \Big( (\beta-1) \sigma_t S_t^{2\beta-2} dB_t + \frac12 (\beta-1)(\beta-2) \sigma_t^2 S_t^{3\beta-3} dt\Big) \nonumber \\
&\qquad\qquad\qquad+  \rho \omega (\beta-1) \sigma_t^2 S_t^{2\beta-2} dt\,, \nonumber
\end{align}
where the last term was computed as
\begin{align}
\frac{\partial^2 (\sigma_t S_t^{\beta-1})}{\partial\sigma_t \partial S_t} =(\beta-1) S_t^{\beta-1} (\omega \sigma_t) (\sigma_t S_t^\beta) =
\omega (\beta-1) \sigma_t^2 S_t^{2\beta-2}\,.
\end{align}
This can be written equivalently as
\begin{align}
    \frac{dv_t}{v_t} &= \omega dZ_t + (\beta-1) v_t dB_t + 
    (\beta-1) v_t \Big( \rho \omega + \frac12 (\beta-2) v_t\Big) dt \\
    &= \sigma_V(v_t) dW_t + \mu_V(v_t) dt\,, \nonumber
\end{align}
where $W_t:=\int_{0}^{t}\frac{ \omega dZ_s + (\beta-1) v_s dB_s}{\sigma_V(v_{s})}$ is another standard Brownian motion due to L\'{e}vy's characterization of Brownian motion.
This reproduces the stated result (\ref{vsde}).
\end{proof}

We will show in Section~\ref{sec:existence:uniqueness} that under the assumption that $0\leq\beta<1$ and $\rho<0$, the SDE \eqref{vsde} has a unique solution.
In Section~\ref{sec:2.4} we prove that $(v_{t})_{t\geq 0}$
is explosive, i.e. $\mathbb{P}(\tau_{\infty}<\infty)>0$
where $\tau_{\infty}:=\sup\{t\geq 0:v_{t}<\infty\}$. 
As an immediate consequence, for any $\tau>0$, by the definition of $\mathrm{VIX}_{T}$ in \eqref{VIX:v:eqn}, we have $\mathrm{VIX}_{T}=\infty$
with probability one.
As a result, the VIX forward price $F_V(T)$ and the
VIX call option prices are infinite, and the VIX put option prices vanish.
In Section~\ref{sec:3} we will present a modification of the process \eqref{vsde} which avoids these issues.

\subsection{Existence and Uniqueness of the Volatility Process}\label{sec:existence:uniqueness}

We study in this section the existence and uniqueness for the volatility process $v_{t}$ in (\ref{vsde}). 
Note that since the solution to the SABR model 
exists\footnote{Conditional on a realization of the volatility process $\sigma_t$, the SABR model with $0\leq \beta < 1$ reduces to the CEV model, which is related to a squared Bessel process by a change of variable \cite{Linetsky2010}. The conditions for existence and uniqueness of the solutions of this model are given for example in Section 2 in Chen, Oosterlee, van der Weide (2012) \cite{Chen2012}; see also \cite{Andersen2000}.}, 
Lemma~\ref{lem:existence} already shows the existence
of the solution for the SDE \eqref{vsde}.
This SDE has the form
\begin{equation}
dv_t = (\alpha v_t^3 + \delta v_t^2) dt + v_t \sigma_V(v_t) dW_t\,,
\end{equation}
with coefficients 
\begin{equation}\label{alpha:delta}
\alpha := \frac12 (1-\beta)(2-\beta)\,,\quad
\delta := -(1-\beta) \rho \omega \,.
\end{equation}

In the rest of this section, we make the following assumption on the model parameters.

\begin{assumption}\label{assump:1}
Assume $0 \leq \beta < 1$ and $-1<\rho<0$. 
\end{assumption}
With this assumption, the coefficients in the drift \eqref{alpha:delta} are positive, i.e. $\alpha,\delta >0$ and $\sigma_V(v)>0$ for all real $v$.
The drift and volatility in \eqref{sigma:v:eqn}-\eqref{mu:v:eqn} are not sub-linear so we cannot use Theorem~2.9 in Karatzas and Shreve \cite{KS} to prove the existence and uniqueness.
Instead, we will use Theorem~5.15 in Karatzas and Shreve \cite{KS} to show the existence and uniqueness in the weak sense.
Consider the SDE 
\begin{equation}\label{general:SDE}
dX_t = b(X_t) dt + \sigma(X_t)dW_t, 
\end{equation}
satisfying non-degeneracy \eqref{ND} and local integrability \eqref{LI} conditions,
defined as
\begin{equation}\label{ND}\tag{ND}
\sigma^2(x) > 0\,,\qquad\text{for every $x \in (0,\infty)$}\,,
\end{equation}
and
\begin{equation}\label{LI}\tag{LI}
\text{for any $x \in (0,\infty)$, there exists some $\varepsilon>0$ such that $\int_{x-\varepsilon}^{x+\varepsilon} \frac{|b(y)| dy}{\sigma^2(y)} < \infty$}\,.
\end{equation}

For our case we have $b(x) = \alpha x^3 + \delta x^2$ and $\sigma(x) = x \sigma_V(x)$, where $\alpha,\delta$ are defined in \eqref{alpha:delta} and $\sigma_{V}(\cdot)$ is defined in \eqref{sigma:v:eqn}.
These conditions are satisfied by our process
under Assumption~\ref{assump:1}, which ensures the non-vanishing of $\sigma_V(x)$. 

By Theorem~5.15 in Karatzas and Shreve \cite{KS} (reproduced below for convenience) the solution of \eqref{vsde} exists and is unique in weak sense.

\begin{theorem}[Theorem~5.15 in Karatzas and Shreve \cite{KS}]
Assume that $\sigma^{-2}$ is locally integrable at every point in $[0,\infty)$, and conditions~\eqref{ND} and \eqref{LI} hold. 
Then for every initial distribution for $X_0$, the SDE~\eqref{general:SDE} for $X_t$ has a weak solution up to an explosion time, 
and this solution is unique in the sense of probability law.
\end{theorem}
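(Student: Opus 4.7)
The plan is to reduce the SDE \eqref{general:SDE} to a time-changed Brownian motion via the classical scale function transformation, and then invoke the Engelbert--Schmidt construction to produce a weak solution; uniqueness in law will follow from the Dambis--Dubins--Schwarz representation theorem. This is the standard route in the one-dimensional setting, and the hypotheses \eqref{ND}, \eqref{LI}, and local integrability of $\sigma^{-2}$ are exactly what each step of the argument consumes.

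First I would fix $c \in (0,\infty)$ and define the scale function via
\begin{equation*}
s'(x) = \exp\left(-2 \int_c^x \frac{b(y)}{\sigma^2(y)} \, dy\right), \qquad s(x) = \int_c^x s'(y) \, dy.
\end{equation*}
The local integrability hypothesis \eqref{LI} makes the inner integral well-defined at every $x \in (0,\infty)$, so $s$ is a strictly increasing $C^2$ map of $(0,\infty)$ onto an open interval $I \subset \mathbb{R}$, and by construction $\tfrac{1}{2} \sigma^2 s'' + b s' = 0$. Applying It\^o's formula to a candidate solution and using this identity cancels the drift, giving
\begin{equation*}
dY_t = \tilde{\sigma}(Y_t) \, d\tilde{W}_t, \qquad \tilde{\sigma}(y) := (\sigma s')(s^{-1}(y)),
\end{equation*}
for $Y_t = s(X_t)$, with $\tilde{\sigma}$ non-vanishing on $I$ thanks to \eqref{ND}. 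So existence and uniqueness for \eqref{general:SDE} is reduced to the same question for a driftless SDE on $I$.

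Next I would carry out the Engelbert--Schmidt time change. On an auxiliary probability space carrying a standard Brownian motion $\beta$ with $\beta_0 = s(X_0)$, set $A_t := \int_0^t \tilde{\sigma}^{-2}(\beta_u) \, du$. Because $\sigma^{-2}$ is locally integrable on $[0,\infty)$ and $s$ is a $C^1$-diffeomorphism onto $I$, $\tilde{\sigma}^{-2}$ is locally integrable on $I$, so $A$ is finite and strictly increasing up to the exit time $\tau^* := \inf\{t : \beta_t \notin I\}$. Inverting, $T_t := A_t^{-1}$ is a strictly increasing absolutely continuous time change, and L\'evy's characterization shows that $Y_t := \beta_{T_t}$ satisfies $dY_t = \tilde{\sigma}(Y_t) \, dW_t$ for a suitable Brownian motion $W$; setting $X_t := s^{-1}(Y_t)$ yields a weak solution of \eqref{general:SDE} up to the explosion time $\tau_\infty := A_{\tau^*}$. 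For uniqueness in law I would run this backwards: any weak solution $X$ makes $Y = s(X)$ a continuous local martingale with $\langle Y \rangle_t = \int_0^t \tilde{\sigma}^2(Y_u) \, du$, so Dambis--Dubins--Schwarz exhibits $Y$ as the time-change by exactly the same deterministic functional $A$ of some Brownian motion, forcing the law of $X = s^{-1}(Y)$ to be determined by $X_0$ alone.

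The main obstacle, and the point that justifies the two separate integrability hypotheses, is handling the boundary behavior correctly so that the construction is maximal. One must verify that $\tau_\infty = A_{\tau^*}$ really is the genuine explosion time of $X$ rather than an artificial stopping at some interior singularity: \eqref{LI} keeps $s'$ finite and strictly positive throughout $(0,\infty)$ so that no interior pathology of $s$ obstructs the transformation, while local integrability of $\sigma^{-2}$ keeps $A$ finite on $\{t < \tau^*\}$ so that $\beta$ does not ``stall'' at any interior point of $I$ and the time-changed process does not terminate before hitting the boundary. All other steps are routine applications of It\^o calculus and the Dambis--Dubins--Schwarz theorem.
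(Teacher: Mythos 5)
This statement is not proved in the paper at all: it is quoted verbatim as Theorem~5.15 of Karatzas and Shreve and used as a black box, so there is no in-paper argument to compare against. Your sketch reproduces the standard proof from the source itself (removal of drift via the scale function, then the Engelbert--Schmidt time-change construction for the driftless equation, with uniqueness in law via Dambis--Dubins--Schwarz), and it is correct in outline. One technical point deserves correction: under \eqref{LI} the density $b/\sigma^2$ is only locally integrable, so $s'$ is absolutely continuous but $s$ is generally \emph{not} $C^2$; the classical It\^o formula cannot be applied to $s$ directly, and one must use the generalized It\^o rule for functions that are differences of convex functions (or whose derivative is absolutely continuous), which is exactly how Karatzas and Shreve handle this step. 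With that adjustment your argument is the standard one and carries through.
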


\subsection{Reduction to the Natural Scale}
\label{sec:2.2}

We present in this section the reduction of the diffusion \eqref{vsde} to its natural scale. The approach used is described, for example, in Section 5.5.B (p.339) of Karatzas and Shreve \cite{KS}. We introduce and study several functions which will be required for the application of the Feller explosion criterion in Section \ref{sec:2.4}.

Define the \textit{scale} function
\begin{equation}\label{pdef}
p(x) := \int_c^x e^{-2 \int_c^\xi \frac{b(y) dy}{\sigma^2(y)} } d\xi\,,
\end{equation}
with $c$ an arbitrary value in $[0,\infty)$. 
By Proposition~5.13 in \cite{KS}, $Y_t = p(X_t)$ follows the driftless process $dY_t = \bar \sigma(Y_t) dW_t$ with volatility:
\begin{equation}
\bar\sigma(y) :=
\begin{cases}
p'(q(y)) \sigma(q(y)) & \text{if $y \in (0,p_\infty)$}, \\
0 & \text{otherwise}, \\
\end{cases}
\end{equation}
where $q(y)$ is the inverse of $p(x)$ and $p_\infty=\lim_{x\rightarrow\infty}p(x)$ (see Proposition~\ref{prop:p}). 
The diffusion process $Y_t$ is in its natural scale, since its scale function is simply $x$.

Denote the integral in the exponent of (\ref{pdef}) as
\begin{equation}
F(x;c) := \int_c^x \frac{b(y) dy}{\sigma^2(y)} = 
\int_c^x \frac{\alpha y + \delta}{\sigma_V^2(y)} dy \,.
\end{equation}
The denominator does not vanish for all real $y$.
Thus the integrand is well-behaved as $y\to 0$ and we can take without any loss of generality $c=0$.
The value of the integral for any other $c>0$ can be recovered as $F(x;c) = F(x;0) - F(c;0)$.
Denote for simplicity $F(x) := F(x;0)$, which can be evaluated in closed form with the result 
\begin{eqnarray}
F(x) =\frac{1}{(1-\beta)^2} \left\{
\frac{\alpha}{2} \log \frac{R(x)}{\omega^2} 
- \frac12 \beta(1-\beta) \frac{|\rho|}{\rho_\perp}
\left( \arctan\frac{(1-\beta) x + \bar\omega}{\omega\rho_\perp} - \arctan \frac{\bar\omega}{\omega\rho_\perp}\right) \right\}\,,
\end{eqnarray}
where 
\begin{equation}\label{R:eqn}
R(x) := \sigma_V^2(x) = \omega^2  + 2 \bar\omega (1-\beta) x + (1-\beta)^2 x^2\,,
\end{equation}
with
$\bar\omega := \omega |\rho|$ and $\rho_\perp := \sqrt{1-\rho^2}$.
We note the lower and upper bounds on $e^{-2F(x)}$:
\begin{equation}\label{boundsF}
\left( \frac{\omega^2}{R(x)} \right)^{\frac{\alpha}{(1-\beta)^2}} \leq
e^{-2F(x)} \leq
\kappa
\left( \frac{\omega^2}{R(x)} \right)^{\frac{\alpha}{(1-\beta)^2}}\,,
\end{equation}
where 
\begin{equation}\label{kappa:defn}
\kappa := \exp \left( \frac{\pi}{2} \frac{\beta}{1-\beta} \frac{|\rho|}{\rho_\perp}\right) > 1\,.
\end{equation}

\begin{proposition}\label{prop:p}
Take $c=0$ and assume the conditions in Assumption~\ref{assump:1} for $\beta,\rho$.
The scale function $p(x)$ has the following properties:

i) $p(0)=0$. $p(x)$ is monotonically increasing and approaches a finite limit 
$\lim_{x\to \infty} p(x) = p_\infty < \infty$ as $x\to \infty$.

ii) The large $x$ asymptotics has the form
\begin{equation}\label{largex}
p(x)=p_\infty - \frac{c_1}{x^{\frac{1}{1-\beta}}}+o\left(x^{-1/(1-\beta)}\right)\,,
\end{equation}
as $x\rightarrow\infty$, with $c_1>0$ a positive constant.

iii) The inverse function $q(y)$ diverges to $+\infty$ as
\begin{equation*}
q(y) \sim \left(\frac{c_1}{p_\infty - y} \right)^{1-\beta}\,,
\end{equation*}
as $y\to p_\infty$.
\end{proposition}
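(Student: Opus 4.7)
The plan is to prove the three parts in order, leveraging the explicit closed form of $F(x)$ given in the excerpt together with the bracketing bounds in \eqref{boundsF}.

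For part (i), $p(0)=0$ is immediate from \eqref{pdef} with $c=0$, and strict monotonicity follows because the integrand $e^{-2F(\xi)}$ is strictly positive. To obtain the finite limit $p_\infty$, I use the upper bound from \eqref{boundsF}: since $R(x) \sim (1-\beta)^2 x^2$ as $x\to\infty$, we have $e^{-2F(x)} \le \kappa\,(\omega^2/R(x))^{\alpha/(1-\beta)^2} = O(x^{-2\alpha/(1-\beta)^2})$. Substituting $\alpha=(1-\beta)(2-\beta)/2$ from \eqref{alpha:delta} yields the exponent $2\alpha/(1-\beta)^2 = (2-\beta)/(1-\beta) = 1+1/(1-\beta)$, which is strictly greater than $1$ for $0\le \beta<1$. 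Hence $\int_0^\infty e^{-2F(\xi)}\,d\xi$ converges, so $p_\infty<\infty$.

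For part (ii), I upgrade the crude bound to a sharp asymptotic equivalent of $e^{-2F(x)}$. From the explicit form of $F(x)$, the arctangent contribution tends to a finite limit as $x\to\infty$ (since $\arctan(((1-\beta)x+\bar\omega)/(\omega\rho_\perp)) \to \pi/2$), while the logarithmic piece gives the polynomial decay. Collecting both factors produces $e^{-2F(x)} = A\,x^{-(2-\beta)/(1-\beta)}(1+O(x^{-1}))$, where $A>0$ is the explicit positive constant built from $(\omega/(1-\beta))^{(2-\beta)/(1-\beta)}$ multiplied by $\exp\{(\beta/(1-\beta))(|\rho|/\rho_\perp)(\pi/2-\arctan(\bar\omega/(\omega\rho_\perp)))\}$. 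Integrating from $x$ to $\infty$ and using $(2-\beta)/(1-\beta)-1 = 1/(1-\beta)$ gives $p_\infty-p(x) = \int_x^\infty e^{-2F(\xi)}\,d\xi = c_1 x^{-1/(1-\beta)} + o(x^{-1/(1-\beta)})$ with $c_1 := A(1-\beta)>0$, which is exactly \eqref{largex}.

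For part (iii), I invert the asymptotic from part (ii): if $y=p(x)$ with $x$ large, then $p_\infty-y = c_1 x^{-1/(1-\beta)}(1+o(1))$, and solving for $x$ produces $q(y) = x \sim (c_1/(p_\infty-y))^{1-\beta}$ as $y\to p_\infty$. The main technical effort is in part (ii), where the bracketing bounds \eqref{boundsF} must be sharpened into a single asymptotic equivalent with an explicit positive prefactor; parts (i) and (iii) are then immediate. No serious obstacle is anticipated since the closed form for $F(x)$ makes every required limit explicit.
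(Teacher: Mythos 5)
Your proposal is correct, and parts (i) and (iii) coincide with the paper's argument: monotonicity from positivity of $e^{-2F}$, convergence of $p_\infty$ via the upper bound in \eqref{boundsF} and the exponent $2\alpha/(1-\beta)^2=(2-\beta)/(1-\beta)>1$, and inversion of the tail asymptotics for (iii). Where you genuinely diverge is part (ii): the paper sandwiches $p(x)$ between two constant multiples of the single integral $I(x)=\int_0^x(\omega^2/R(y))^{\alpha/(1-\beta)^2}dy$, computes the tail of $I(x)$, and asserts that $p(x)$ has ``the same asymptotics up to a multiplicative constant.'' Strictly speaking, a two-sided bound by $I(x)$ and $\kappa I(x)$ only controls the \emph{order} of $p_\infty-p(x)$, not an asymptotic equivalent with a single constant $c_1$ as claimed in \eqref{largex}. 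Your route --- extracting the exact limit of the arctangent term in the closed form of $F(x)$ so that $e^{-2F(x)}=A\,x^{-(2-\beta)/(1-\beta)}(1+O(x^{-1}))$ with an explicit $A>0$, then integrating the tail to get $c_1=A(1-\beta)$ --- is the cleaner and in fact the logically tighter way to establish (ii) as stated, and it has the side benefit of producing $c_1$ explicitly. The only thing to watch is that the $O(x^{-1})$ correction must be justified uniformly enough to survive integration over $[x,\infty)$, which it does here since $R(x)/((1-\beta)^2x^2)=1+O(x^{-1})$ and the arctangent converges at rate $O(x^{-1})$; with that remark your argument is complete.
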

\begin{proof}
    The proof is given in the Appendix. 
\end{proof}


\begin{remark}
The choice $c=0$ in the definition of the scale function is not essential and can be relaxed. Denoting $p(x;c)$ the scale function with an arbitrary value of $c\geq 0$, this is related to $p(x)$ defined with $c=0$ as
\begin{equation}
p(x;c) = e^{-2F(c)} (p(x) - p(c)) \,.
\end{equation}
\end{remark}

\begin{remark}\label{prop:sig}
Using the asymptotics of $p(x)$ and $q(y)$ obtained in Proposition \ref{prop:p} we obtain the following properties of the volatility in natural scale $\bar\sigma(y)$:

i) For small $y\to 0$ argument it has the form $\bar\sigma(y) = \omega y + O(y^2)$.

ii) As $y\to p_\infty$ it has the asymptotics $\bar\sigma(y) \sim (p_\infty - y)^\beta$.
\end{remark}



The diffusion in natural scale $Y_t$ is bounded between $0$ and $p_\infty$ and is non-explosive. 
Since the map $q(y) \to \infty$ as $y\to p_\infty$, the process $v_t$ in \eqref{vsde} explodes at the first hitting time of $Y_t$ to level $p_\infty$.


\subsection{Nature of Boundary Points}

We established above that the process for $Y_t = p(X_t)$
takes values in the bounded range $Y_t \in [0,p_\infty]$. The asymptotics of the volatility in the natural scale $\bar \sigma(y)$ from Proposition~\ref{prop:sig} determines the nature of the boundary points.

The $y=0$ boundary is similar to that of a geometric Brownian motion process and is a natural boundary.

The point $y=p_\infty$ is similar to the origin for the CEV process. Recall the classification of the solutions for this case \cite{Linetsky2010}: 

a) $\beta \in (0,\frac12)$. 
The point $y=p_\infty$ is a regular boundary point. The fundamental solution is not unique.
The problem is well-posed only if an additional boundary condition is imposed, for example, absorbing or reflecting boundary.

b) $\beta \in [ \frac12, 1)$. 
The point $y=p_\infty$ is an exit boundary point.
There is a unique fundamental solution with decreasing norm and mass at $y=p_\infty$
corresponding to the absorption at this point.

\subsection{Feller Test of Explosion for the Volatility Process}
\label{sec:2.4}

In this section, we study the explosion for the volatility process $v_{t}$ in \eqref{vsde}.
A sufficient condition for the absence of explosions is expressed in terms of the scale function as $p_{\infty} = \infty$ \cite{KS}. 
By point (i) in Proposition \ref{prop:p}, the $x\to \infty$ limit of the scale function is finite, so explosions cannot be excluded. However the finiteness of $p_\infty$ is only a necessary, but not sufficient criterion for the existence of explosions. 
Using the Feller test of explosion \cite{Feller1952}, 
we show in this section that the process $v_t$ in \eqref{vsde} explodes in finite time with non-zero probability. 
A heuristic proof of this result is also given in Section 8.5 of \cite{Lewis2}.

\begin{proposition}\label{prop:explosion}
The process $v_t$ in \eqref{vsde} explodes in finite time with non-zero probability.
\end{proposition}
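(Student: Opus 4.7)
The strategy is to apply Feller's test of explosion to the diffusion in its natural scale, where the endpoint asymptotics have already been worked out in Section~\ref{sec:2.2} and Remark~\ref{prop:sig}. This is much cleaner than applying Feller's criterion directly to $v_t$ on $(0,\infty)$, because all the growth bookkeeping has been absorbed into $p(x)$ and $\bar\sigma(y)$.

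First I would recall Feller's test in the form most convenient here: for a driftless diffusion $dY_t = \bar\sigma(Y_t)\,dW_t$ on a bounded interval $(0,p_\infty)$, the right endpoint $p_\infty$ is reached in finite time with positive probability if and only if
\begin{equation*}
v(p_\infty-) \;:=\; \int_{c}^{p_\infty} (p_\infty - y)\,\frac{2\,dy}{\bar\sigma^2(y)} \;<\; \infty
\end{equation*}
for some $c\in(0,p_\infty)$ (the condition for the left endpoint is analogous, and the process is non-explosive iff $v$ diverges at both endpoints). Since by Proposition~\ref{prop:p} the scale function $p$ is a homeomorphism from $(0,\infty)$ onto $(0,p_\infty)$ with $q(y)=p^{-1}(y)\to \infty$ as $y\to p_\infty$, explosion of $v_t$ is exactly equivalent to $Y_t = p(v_t)$ reaching $p_\infty$ in finite time.

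The key step is then to plug in the endpoint asymptotics of $\bar\sigma$ from Remark~\ref{prop:sig}. Near $y=p_\infty$ we have $\bar\sigma(y)\sim C(p_\infty - y)^{\beta}$ for some constant $C>0$, so the integrand behaves like
\begin{equation*}
(p_\infty - y)\cdot (p_\infty - y)^{-2\beta} \;=\; (p_\infty - y)^{1-2\beta},
\end{equation*}
which is integrable at $y=p_\infty$ precisely when $1-2\beta > -1$, i.e.\ when $\beta < 1$. Under Assumption~\ref{assump:1} this is satisfied, so $v(p_\infty-)<\infty$. I would also remark that the left endpoint $y=0$ is a natural boundary (as noted in Section~2.3, since $\bar\sigma(y)=\omega y + O(y^2)$ there, this is the geometric-Brownian-motion case), so $Y_t$ cannot leave through $0$ in finite time, which means the positive probability of reaching the boundary must come from $p_\infty$. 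By Feller's test, $\mathbb{P}(Y_t \text{ hits } p_\infty \text{ in finite time})>0$, and translating back through $q$ yields $\mathbb{P}(\tau_\infty < \infty) > 0$ for $v_t$.

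The main obstacle is just making sure that Remark~\ref{prop:sig}(ii) is stated with enough precision to justify the two-sided bound $\bar\sigma(y)^2 \asymp (p_\infty - y)^{2\beta}$ near $p_\infty$ (as opposed to only an upper or lower bound). If needed, one can obtain explicit constants by combining the large-$x$ expansion $p_\infty - p(x) \sim c_1 x^{-1/(1-\beta)}$ from Proposition~\ref{prop:p}(ii), the inversion $q(y)\sim (c_1/(p_\infty-y))^{1-\beta}$ from (iii), and $p'(x) \asymp x^{-(2-\beta)/(1-\beta)}$ coming from \eqref{boundsF} together with $R(x)\sim (1-\beta)^2 x^2$. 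Everything else is a one-line integrability check, and nothing more delicate than power-counting is required.
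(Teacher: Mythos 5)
Your proof is correct, and it reorganizes the argument in a way that differs from the paper's. The paper applies Feller's test directly in the original coordinate: it bounds the function $\nu(x)=\int_c^x p'(y)\int_c^y \frac{2\,dz}{p'(z)\sigma^2(z)}\,dy$ using the explicit two-sided bounds \eqref{boundsF} on $e^{-2F}$, and then runs a three-way case analysis ($\beta=\tfrac12$, $\beta<\tfrac12$, $\beta>\tfrac12$) on the growth of the inner integral to conclude $\nu(\infty-)<\infty$. You instead pass to the natural scale $Y_t=p(v_t)$ and test accessibility of the finite endpoint $p_\infty$ via $\int^{p_\infty}(p_\infty-y)\,\frac{2\,dy}{\bar\sigma^2(y)}$; since the speed measure and the Feller functional are invariant under the scale change, this is the same quantity, but the endpoint asymptotics $\bar\sigma(y)\asymp(p_\infty-y)^{\beta}$ collapse the whole computation to the single power-counting condition $1-2\beta>-1$, valid uniformly for all $\beta\in[0,1)$ with no case split. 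What your route buys is brevity and a transparent reason for explosion (the $y=p_\infty$ boundary looks like the origin of a CEV process with exponent $\beta<1$, hence attainable); you are also slightly more careful than the paper in noting that $y=0$ is inaccessible, so the positive probability of finite exit must be realized at $p_\infty$. What it costs is that you lean on Remark~\ref{prop:sig}(ii), which the paper states only as a one-sided-looking asymptotic without proof; as you yourself note, to make the argument self-contained you must upgrade it to a two-sided bound $\bar\sigma(y)^2\asymp(p_\infty-y)^{2\beta}$, which does follow from \eqref{boundsF} together with Proposition~\ref{prop:p}(ii)--(iii), so this is a presentational gap rather than a mathematical one.
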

\begin{proof}
    The proof is given in the Appendix. 
\end{proof}

\begin{remark}\label{remark:explosion}
Since the process for $v_t$ in \eqref{vsde} explodes in finite time with non-zero probability, we have
$\mathbb{E}[(v_t)^p]=\infty$ for all $p>0$. In particular, the expectation $\mathbb{E}[v_t]$ is infinite.
\end{remark}

We give also a simple proof that $S_t$ is a true martingale in the SABR model for all $\beta<1$. This was proved heuristically in Section 8.5 of \cite{Lewis2}. The proof is given in the Appendix.
\begin{proposition}\label{prop:martingale}
Assume $\beta<1$. Then the asset price $S_t$ in the SABR model is a true martingale such that $\mathbb{E}[S_t] = S_0$ for all $t>0$.
\end{proposition}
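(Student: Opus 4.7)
The plan is to prove $\mathbb{E}[S_t]=S_0$ by establishing a uniform $L^{1+\epsilon}$-bound on the stopped processes, which delivers the required uniform integrability. Since $S_t=S_0+\int_0^t \sigma_s S_s^\beta dB_s$ is a non-negative continuous local martingale, it is automatically a supermartingale (apply Fatou to any reducing sequence), so $\mathbb{E}[S_t]\leq S_0$ and $\sup_{s\leq t}S_s<\infty$ a.s. Let $\tau_n:=\inf\{t\geq 0:S_t\geq n\}$; each $S_{\cdot\wedge\tau_n}$ is bounded, hence a true martingale with $\mathbb{E}[S_{t\wedge\tau_n}]=S_0$, and it remains to pass to the limit as $n\to\infty$.

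For the key estimate, apply It\^o's formula to $f(x)=x^{1+\epsilon}$ on $[0,t\wedge\tau_n]$ (regularizing near $x=0$ via $(x+\delta)^{1+\epsilon}$ and sending $\delta\to 0$ if needed). The stochastic-integral part has zero expectation: its quadratic variation $\int_0^{t\wedge\tau_n}\sigma_s^2 S_s^{2(\epsilon+\beta)}ds$ is bounded by $n^{2(\epsilon+\beta)}\int_0^t \mathbb{E}[\sigma_s^2]\,ds<\infty$ since $\sigma_s$ is a geometric Brownian motion with moments of every order. This yields
\begin{equation*}
\mathbb{E}[S_{t\wedge\tau_n}^{1+\epsilon}]=S_0^{1+\epsilon}+\tfrac12(1+\epsilon)\epsilon\int_0^t\mathbb{E}\!\left[\sigma_s^2 S_s^{\gamma}\mathbf{1}_{\{s\leq\tau_n\}}\right]ds,\qquad \gamma:=\epsilon-1+2\beta.
\end{equation*}
Because $\beta<1$, the interval $(\max(0,1-2\beta),\,2-2\beta)$ is non-empty, and for $\epsilon$ chosen there, $\gamma\in(0,1)$. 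H\"older's inequality with conjugate exponents $1/\gamma$ and $1/(1-\gamma)$, combined with the supermartingale bound $\mathbb{E}[S_s]\leq S_0$ and the finiteness of $\mathbb{E}[\sigma_s^{2/(1-\gamma)}]$, gives $\mathbb{E}[\sigma_s^2 S_s^\gamma]\leq C(s)\,S_0^\gamma$, and hence $\sup_n\mathbb{E}[S_{t\wedge\tau_n}^{1+\epsilon}]<\infty$.

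Markov's inequality then produces $\mathbb{E}[S_{t\wedge\tau_n}\mathbf{1}_{\{S_{t\wedge\tau_n}>K\}}]\leq K^{-\epsilon}\sup_n \mathbb{E}[S_{t\wedge\tau_n}^{1+\epsilon}]\to 0$ as $K\to\infty$, so $(S_{t\wedge\tau_n})_n$ is uniformly integrable. Since $\tau_n\to\infty$ a.s.\ (the non-negative supermartingale $S$ does not explode), $S_{t\wedge\tau_n}\to S_t$ pointwise, and one concludes $\mathbb{E}[S_t]=\lim_n \mathbb{E}[S_{t\wedge\tau_n}]=S_0$. The main obstacle is the simultaneous calibration of $\epsilon$: the lower bound $\epsilon>1-2\beta$ is required to keep $\gamma$ positive so that $S_s^\gamma$ remains integrable (this only bites when $\beta<1/2$, where $S$ can approach the origin), while the upper bound $\epsilon<2-2\beta$ is precisely the margin opened up by the sub-linear SABR diffusion exponent, which is exactly what distinguishes $\beta<1$ from the martingale-breaking regime $\beta=1$, $\rho>0$.
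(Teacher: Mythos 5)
Your proof is correct, and it takes a genuinely different route from the paper's. The paper invokes Sin's criterion, which reduces the martingale property of $S_t$ to the non-explosion of an auxiliary volatility diffusion under the measure change associated with the stock numéraire, and then verifies non-explosion through the scale-function condition $\hat p(\pm\infty)=\pm\infty$ — reusing the scale-function machinery already set up for the Feller test. You instead give a direct, self-contained uniform-integrability argument: the sub-linear diffusion exponent $\beta<1$ opens the window $\epsilon\in(\max(0,1-2\beta),\,2-2\beta)$ in which the Itô correction for $S^{1+\epsilon}$ involves $\sigma_s^2 S_s^{\gamma}$ with $\gamma=\epsilon-1+2\beta\in(0,1)$, and this is controlled via Hölder by the supermartingale bound $\mathbb{E}[S_s]\le S_0$ together with the all-order moments of the lognormal $\sigma_s$; the resulting uniform bound $\sup_n\mathbb{E}[S_{t\wedge\tau_n}^{1+\epsilon}]<\infty$ gives uniform integrability and hence $\mathbb{E}[S_t]=S_0$. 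Your approach buys elementarity (no measure change, no boundary classification of an auxiliary diffusion) and actually yields more than the statement, namely $\mathbb{E}[S_t^{1+\epsilon}]<\infty$ by Fatou; the paper's approach buys an if-and-only-if criterion and degrades gracefully to the boundary case $\beta=1$, where your moment window closes (consistent with the known failure of the martingale property for $\beta=1$, $\rho>0$). Two routine points you gloss over but which do go through: the regularization of $x\mapsto x^{1+\epsilon}$ near the origin is genuinely needed since $S$ can be absorbed at $0$ for $\beta<\tfrac12$ (the bound $(x+\delta)^{\epsilon-1}\le x^{\epsilon-1}$ preserves your upper estimate as $\delta\to0$), and the localization of the stochastic integral requires only $\mathbb{E}\int_0^t\sigma_s^2\,ds<\infty$, which holds.
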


\section{Capped Volatility Process and VIX Option Pricing}
\label{sec:3}

The result of Proposition~\ref{prop:explosion} is a surprising negative result. It implies that the VIX futures prices $F_V(T)$ in \eqref{VIX:F} and VIX call option prices $C_V(K,T)$ in \eqref{VIX:call} are infinite. On the other hand the VIX put option prices $P_V(K,T)$ are zero for any maturity $T>0$ and strike $K>0$. This limits the practical usefulness of the SABR model with $\beta<1$ for pricing these products. 

On the other hand, the SABR model with $\beta=1$ has well behaved VIX options and futures prices. The predictions for this case have been discussed in \cite{Forde2023} and \cite{VIXpaper}.

As a remedy for the SABR model with $\beta<1$, we propose to cap
the drift and diffusion terms in the $v_{t}$ process to make it non-explosive
such that one can price the VIX options in practice.
In particular, based on the volatility process $v_{t}$ in \eqref{vsde}, 
we propose the following modification, the \emph{capped volatility process}:
\begin{eqnarray}\label{vsde:modification}
&&\frac{dv_t}{v_t} =  \hat{\sigma}_V(v_t) d W_t + \hat{\mu}_V(v_t) dt\,,
\end{eqnarray}
with \emph{capped} volatility and drift
\begin{eqnarray}
&& \hat{\sigma}_V(v) := \min\left(a,\sigma_V(v)\right)\,, 
\\
&& \hat{\mu}_V(v) := \min\left(b,\mu_V(v)\right)\cdot 1_{\mu_V(v)>0}
+\max\left(-b,\mu_V(v)\right)\cdot 1_{\mu_V(v)\leq 0}\,,
\end{eqnarray}
where $\sigma_V(v)$ and $\mu_V(v)$ are given in \eqref{sigma:v:eqn}-\eqref{mu:v:eqn} and $a,b>0$ are given caps.
Under Assumption~\ref{assump:1}, $\sigma_{V}(v)$ is monotonically
increasing in $v\geq 0$, and we assume that $a>\sigma_{V}(0)=\omega$
such that
\begin{equation}
\hat{\sigma}_V(v)=
\begin{cases}
\sigma_{V}(v)=\sqrt{\omega^2 + (\beta-1)^2 v^2 + 2\rho (\beta-1) \omega v} &\text{if $v\leq\hat{v}$},
\\
a &\text{if $v>\hat{v}$},
\end{cases}
\end{equation}
where
\begin{equation}\label{vhat}
\hat{v}:=\frac{\rho\omega+\sqrt{a^{2}+(\rho^{2}-1)\omega^{2}}}{1-\beta}.
\end{equation}

With $v_{t}$ defined in \eqref{vsde:modification},
the $T,\tau\to 0$ asymptotics of the VIX call and put options can be achieved by quoting the results 
of the short-maturity European call and put options for the local volatility model. 
First, we will show that $\mathrm{VIX}_{T}\rightarrow v_{T}$ almost surely
as $\tau\rightarrow 0$, and indeed we have the following result.

\begin{proposition}\label{prop:VIX:approx}
For any $\tau,T>0$, we have
\begin{equation}
v_{T}e^{-b\tau}\leq\mathrm{VIX}_{T}\leq v_{T}e^{b\tau+\frac{1}{2}a^{2}\tau}.
\end{equation}
\end{proposition}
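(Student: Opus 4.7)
The plan is to analyze the SDE (\ref{vsde:modification}) for $v_t^2$ directly, factor it into a true martingale part and a bounded drift part, and then take conditional expectations and time-average over $[T,T+\tau]$.

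First, I would apply It\^o's lemma to $v_t^2$ starting from (\ref{vsde:modification}) to obtain
\begin{equation*}
\frac{d(v_t^2)}{v_t^2} = 2\hat{\sigma}_V(v_t) \, dW_t + \bigl(2\hat{\mu}_V(v_t) + \hat{\sigma}_V^2(v_t)\bigr) dt,
\end{equation*}
so that for $t \geq T$,
\begin{equation*}
v_t^2 = v_T^2 \, \mathcal{E}_t \, \exp\!\left(\int_T^t \bigl(2\hat{\mu}_V(v_s) + \hat{\sigma}_V^2(v_s)\bigr) ds\right),
\end{equation*}
where $\mathcal{E}_t := \exp\!\bigl(2\int_T^t \hat{\sigma}_V(v_s) dW_s - 2\int_T^t \hat{\sigma}_V^2(v_s) ds\bigr)$ is a stochastic (Dol\'eans) exponential.

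Next, I would verify that $\mathcal{E}_t$ is a true martingale on $[T, T+\tau]$. This is immediate from Novikov's criterion: by construction $\hat{\sigma}_V(v) \leq a$ for all $v$, so $\mathbb{E}[\exp(2\int_T^{T+\tau} \hat{\sigma}_V^2 \, ds)] \leq e^{2a^2 \tau} < \infty$. In particular $\mathbb{E}[\mathcal{E}_t \mid \mathcal{F}_T] = 1$ for all $t \in [T, T+\tau]$. Combined with the pointwise bounds $-b \leq \hat{\mu}_V(v) \leq b$ and $0 \leq \hat{\sigma}_V^2(v) \leq a^2$, which give $-2b \leq 2\hat{\mu}_V + \hat{\sigma}_V^2 \leq 2b + a^2$, conditioning on $\mathcal{F}_T$ and pulling the bounded deterministic factor outside yields
\begin{equation*}
v_T^2 \, e^{-2b(t-T)} \;\leq\; \mathbb{E}[v_t^2 \mid \mathcal{F}_T] \;\leq\; v_T^2 \, e^{(2b+a^2)(t-T)}.
\end{equation*}

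Finally, I would integrate in $t$ over $[T,T+\tau]$ and divide by $\tau$. Since $t \mapsto e^{-2b(t-T)}$ attains its minimum $e^{-2b\tau}$ at $t = T+\tau$ and $t \mapsto e^{(2b+a^2)(t-T)}$ attains its maximum $e^{(2b+a^2)\tau}$ there, the time-averages satisfy the same bounds, giving $v_T^2 e^{-2b\tau} \leq \mathrm{VIX}_T^2 \leq v_T^2 e^{(2b+a^2)\tau}$. Taking square roots yields the claim. The only substantive step is the martingale property of $\mathcal{E}_t$, but the cap $\hat{\sigma}_V \leq a$ makes Novikov's condition trivial, so no real obstacle arises; the rest is bookkeeping with deterministic bounds.
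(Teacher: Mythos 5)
Your proof is correct and follows essentially the same route as the paper: both write $v_t^2$ as a Dol\'eans exponential of $2\int\hat{\sigma}_V\,dW$ times a factor bounded between $e^{-2b(t-T)}$ and $e^{(2b+a^2)(t-T)}$, take conditional expectations, and time-average. The only difference is cosmetic — you invoke Novikov explicitly to justify $\mathbb{E}[\mathcal{E}_t\mid\mathcal{F}_T]=1$, which the paper leaves implicit; this is a welcome bit of extra care, not a different argument.
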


\begin{proof}
First, we notice that for any $v$, 
$0\leq\hat{\sigma}_{V}(v)\leq a$
and $|\hat{\mu}_{V}(v)|\leq b$.
Next, it follows from \eqref{vsde:modification} that
for any $t\geq T$, 
\begin{equation}
v_{t}=v_{T}e^{\int_{T}^{t}(\hat{\mu}_{V}(v_{s})-\frac{1}{2}\hat{\sigma}_{V}^{2}(v_{s}))ds+\int_{T}^{t}\hat{\sigma}_{V}(v_{s})dW_{s}}.
\end{equation}
Therefore, we have
\begin{align*}
\mathbb{E}[v_{t}^{2}|\mathcal{F}_{T}]
&=v_{T}^{2}\mathbb{E}\left[e^{\int_{T}^{t}(2\hat{\mu}_{V}(v_{s})-\hat{\sigma}_{V}^{2}(v_{s}))ds+2\int_{T}^{t}\hat{\sigma}_{V}(v_{s})dW_{s}}|\mathcal{F}_{T}\right]
\\
&\leq
v_{T}^{2}e^{2b(t-T)+a^{2}(t-T)}\mathbb{E}\left[e^{-\int_{T}^{t}\frac{1}{2}(2\hat{\sigma}_{V})^{2}(v_{s})ds+\int_{T}^{t}2\hat{\sigma}_{V}(v_{s})dW_{s}}|\mathcal{F}_{T}\right]
=v_{T}^{2}e^{2b(t-T)+a^{2}(t-T)},
\end{align*}
and similarly, 
\begin{align*}
\mathbb{E}[v_{t}^{2}|\mathcal{F}_{T}]
\geq
v_{T}^{2}e^{-2b(t-T)}\mathbb{E}\left[e^{-\int_{T}^{t}\frac{1}{2}(2\hat{\sigma}_{V})^{2}(v_{s})ds+\int_{T}^{t}2\hat{\sigma}_{V}(v_{s})dW_{s}}|\mathcal{F}_{T}\right]
=v_{T}^{2}e^{-2b(t-T)}.
\end{align*}
Hence, we conclude that
\begin{align*}
\mathrm{VIX}_{T}
\leq\left(v_{T}^{2}\frac{1}{\tau}\int_{T}^{T+\tau}e^{2b(t-T)+a^{2}(t-T)}dt\right)^{1/2}
\leq v_{T}e^{b\tau+\frac{1}{2}a^{2}\tau},
\end{align*}
and
\begin{align*}
\mathrm{VIX}_{T}
\geq\left(v_{T}^{2}\frac{1}{\tau}\int_{T}^{T+\tau}e^{-2b(t-T)}dt\right)^{1/2}
\geq v_{T}e^{-b\tau}.
\end{align*}
This completes the proof.
\end{proof}

Proposition~\ref{prop:VIX:approx} implies that $|\mathrm{VIX}_{T}-v_{T}|=O(\tau)$ almost surely as $\tau\rightarrow 0$.
We can leverage this result and the short-maturity asymptotics
for European options in the local volatility model \cite{BBFpaper} to obtain the short-maturity asymptotics for OTM VIX options.

Before addressing the short-maturity asymptotics of VIX options we discuss the forward VIX $F_V(T)$ defined as in (\ref{VIX:F}). In the $\tau \to 0$ limit this becomes $\mathbb{E}[v_T]$. Due to the explosion of $v_t$ in \eqref{vsde}, this expectation is infinite (Remark~\ref{remark:explosion}). However, the expectation of the capped process gives a finite but $a$-dependent value, which we denote $F_V(T,a) := \mathbb{E}[v_T]$
with the capped process $v_T$ given in \eqref{vsde:modification}.
We prove next a result for the short-maturity limit of $F_V(T,a)$ at finite $a$.

\begin{proposition}\label{prop:FVT:smallT}
We have
\begin{equation}
\lim_{T\to 0} F_V(T,a) = v_0\,.
\end{equation}
\end{proposition}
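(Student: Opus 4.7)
The plan is to exploit the explicit exponential representation of the solution to \eqref{vsde:modification} and combine pathwise convergence as $T\to 0$ with uniform integrability to pass to the limit inside the expectation.

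First, I would write $v_T$ in closed form via It\^o's lemma as
\begin{equation*}
v_T = v_0 \exp\left( \int_0^T \hat{\sigma}_V(v_s) dW_s + \int_0^T \left(\hat{\mu}_V(v_s) - \tfrac{1}{2}\hat{\sigma}_V^2(v_s)\right) ds \right)
=: v_0 \exp(M_T + A_T).
\end{equation*}
Since $|\hat{\mu}_V(v)| \leq b$ and $0 \leq \hat{\sigma}_V(v) \leq a$ by construction of the capped process, we have the pathwise bound $|A_T| \leq (b + \tfrac{1}{2}a^2) T$, so $A_T \to 0$ a.s.\ as $T\to 0$. The stochastic integral $M_T$ has quadratic variation $\langle M\rangle_T \leq a^2 T$, so by the It\^o isometry $\mathbb{E}[M_T^2] \leq a^2 T \to 0$; in particular $M_T \to 0$ in probability, and hence $v_T \to v_0$ in probability as $T \to 0$.

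Next I would establish uniform integrability of the family $\{v_T\}_{T\in[0,1]}$ by reusing the argument from the proof of Proposition~\ref{prop:VIX:approx}. Specifically,
\begin{equation*}
\mathbb{E}[v_T^2] = v_0^2 \, \mathbb{E}\left[ e^{\int_0^T (2\hat{\mu}_V(v_s) - \hat{\sigma}_V^2(v_s))ds + 2 \int_0^T \hat{\sigma}_V(v_s) dW_s} \right] \leq v_0^2 e^{(2b+a^2)T},
\end{equation*}
where the inequality uses $\hat{\mu}_V \leq b$, $\hat{\sigma}_V \leq a$, and the fact that the Dol\'eans--Dade exponential $\mathcal{E}(2\int_0^\cdot \hat{\sigma}_V(v_s) dW_s)$ is a true martingale (its integrand is bounded, so Novikov's condition is trivially satisfied). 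Therefore $\sup_{T \in [0,1]} \mathbb{E}[v_T^2] < \infty$, which implies the family $\{v_T\}_{T\in[0,1]}$ is uniformly integrable.

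Finally, convergence in probability together with uniform integrability yields convergence in $L^1$, so
\begin{equation*}
\lim_{T\to 0} F_V(T,a) = \lim_{T\to 0} \mathbb{E}[v_T] = v_0.
\end{equation*}
I do not expect a serious obstacle here: the capping makes the coefficients bounded, which both trivializes the martingale property needed for the $L^2$ bound and makes the pathwise $T\to 0$ asymptotics straightforward. The only subtle point is ensuring that $M_T\to 0$ in a mode strong enough to combine with uniform integrability, but convergence in probability suffices once the $L^2$ bound is in hand.
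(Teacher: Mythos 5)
Your proof is correct, but it takes a different route from the paper's. The paper's proof is a direct squeeze on the expectation: writing $v_T = v_0\, e^{\int_0^T \hat{\mu}_V(v_s)\,ds}\,\mathcal{E}\bigl(\int_0^\cdot \hat{\sigma}_V(v_s)\,dW_s\bigr)_T$ and using $|\hat{\mu}_V|\leq b$ together with the unit expectation of the stochastic exponential (a true martingale since its integrand is bounded by $a$), it obtains $v_0 e^{-bT}\leq F_V(T,a)\leq v_0 e^{bT}$ for every $T$, and the limit follows immediately. You instead prove $v_T\to v_0$ in probability (via $|A_T|\leq (b+\tfrac12 a^2)T$ and the It\^o isometry bound $\mathbb{E}[M_T^2]\leq a^2T$) and then upgrade to $L^1$ convergence through the uniform $L^2$ bound $\sup_{T\in[0,1]}\mathbb{E}[v_T^2]<\infty$. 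Both arguments rest on the same key fact --- the boundedness of the capped coefficients makes the relevant Dol\'eans--Dade exponential a true martingale --- but the paper's sandwich is shorter and yields explicit nonasymptotic two-sided bounds on $F_V(T,a)$, whereas your convergence-in-probability-plus-UI scheme is more robust: it would survive situations where the drift contribution cannot be factored out as a deterministic two-sided exponential bound. No gaps; every step you state (the exponential representation, the martingale property via Novikov, and the implication from convergence in probability plus uniform integrability to $L^1$ convergence) is valid.
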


\begin{proof}
For any $T>0$, it follows from \eqref{vsde:modification} that
\begin{equation}
v_{T}=v_{0}e^{\int_{0}^{T}(\hat{\mu}_{V}(v_{s})-\frac{1}{2}\hat{\sigma}_{V}^{2}(v_{s}))ds+\int_{0}^{T}\hat{\sigma}_{V}(v_{s})dW_{s}},
\end{equation}
where $0\leq\hat{\sigma}_{V}(v)\leq a$
and $|\hat{\mu}_{V}(v)|\leq b$ for any $v$.
Therefore, we have
\begin{equation}
F_{V}(T,a)=\mathbb{E}[v_{T}]
\leq v_{0}e^{bT}\mathbb{E}\left[e^{-\int_{0}^{T}\frac{1}{2}\hat{\sigma}_{V}^{2}(v_{s})ds+\int_{0}^{T}\hat{\sigma}_{V}(v_{s})dW_{s}}\right]=v_{0}e^{bT},
\end{equation}
and
\begin{equation}
F_{V}(T,a)=\mathbb{E}[v_{T}]
\geq v_{0}e^{-bT}\mathbb{E}\left[e^{-\int_{0}^{T}\frac{1}{2}\hat{\sigma}_{V}^{2}(v_{s})ds+\int_{0}^{T}\hat{\sigma}_{V}(v_{s})dW_{s}}\right]=v_{0}e^{-bT},
\end{equation}
which completes the proof.
\end{proof}

This implies that for sufficiently small $T$, OTM VIX call options have $K>v_0$ and OTM VIX put options have $K< v_0$.

\begin{theorem}\label{thm:VIX:option}
Let $C_{V}(K,T)$ and $P_{V}(K,T)$ denote the VIX call and put option prices under the capped volatility model \eqref{vsde:modification}. 
Then
\begin{eqnarray}
&&\lim_{\tau,T\to 0} T \log C_V(K,T) = - J_V(K)\,,\quad K > v_0\,,\label{VIX:call:OTM} \\
&&\lim_{\tau,T\to 0} T \log P_V(K,T) = - J_V(K)\,,\quad K < v_0\,,\label{VIX:put:OTM}
\end{eqnarray}
where the rate function is
\begin{equation}\label{J:V:eqn}
J_{V}(K) = \frac12 \left( \int_{v_0}^K \frac{dz}{z \hat{\sigma}_V(z)} \right)^2\,.
\end{equation}    
\end{theorem}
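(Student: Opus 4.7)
The plan is to combine the almost-sure sandwich from Proposition~\ref{prop:VIX:approx} (which reduces $\mathrm{VIX}_T$ to $v_T$ as $\tau\to 0$) with the short-maturity out-of-the-money (OTM) asymptotics for European options in a local volatility model applied to the capped process \eqref{vsde:modification}. The rate function \eqref{J:V:eqn} is exactly the Berestycki-Busca-Florent (BBF) rate $\tfrac{1}{2} d(v_0,K)^2$ associated with the Riemannian distance induced by the volatility $\hat{\sigma}_V$, so identifying it is the natural target.

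First I would translate Proposition~\ref{prop:VIX:approx} into a sandwich on option prices. For the call case with $r=q=0$, the payoff inequality $(v_T e^{-b\tau}-K)^+ \leq (\mathrm{VIX}_T - K)^+ \leq (v_T e^{b\tau+\frac{1}{2}a^2\tau}-K)^+$ and scaling yield
\begin{equation*}
e^{-b\tau}\, \mathbb{E}\bigl[(v_T - K e^{b\tau})^+\bigr] \;\leq\; C_V(K,T) \;\leq\; e^{b\tau+\frac{1}{2}a^2\tau}\, \mathbb{E}\bigl[(v_T - K e^{-b\tau-\frac{1}{2}a^2\tau})^+\bigr],
\end{equation*}
with an analogous inequality for puts. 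Thus it suffices to prove the BBF-type identity
\begin{equation*}
\lim_{T \to 0} T \log \mathbb{E}[(v_T - K')^+] \;=\; -J_V(K'),\qquad K'>v_0,
\end{equation*}
(and its put analogue for $K'<v_0$), and then send $\tau\to 0$ using continuity of $K'\mapsto J_V(K')$. The continuity is immediate because $\hat{\sigma}_V(z) \geq \omega>0$ uniformly in $z$ under Assumption~\ref{assump:1}, so the integrand $1/(z\hat{\sigma}_V(z))$ in \eqref{J:V:eqn} is locally bounded.

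Next I would apply the BBF small-time asymptotics to the capped diffusion \eqref{vsde:modification}. Under our capping, $\hat{\sigma}_V$ is Lipschitz, bounded above by $a$, bounded below by $\omega$ on $(0,\infty)$, and $\hat{\mu}_V$ is bounded by $b$; the SDE thus defines a non-explosive strong Markov process on $(0,\infty)$ with locally regular coefficients. The classical BBF result \cite{BBFpaper} gives, for the driftless version $dX_t/X_t = \hat{\sigma}_V(X_t)dW_t$ started at $v_0$, the OTM small-$T$ asymptotics
\begin{equation*}
\lim_{T \to 0} T \log \mathbb{E}[(X_T - K')^+] \;=\; -\tfrac{1}{2}\,d(v_0,K')^2,\qquad d(v_0,K')=\biggl|\int_{v_0}^{K'} \frac{dz}{z\hat{\sigma}_V(z)}\biggr|,
\end{equation*}
and analogously for the put. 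This identifies the right-hand side with $-J_V(K')$.

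The main technical obstacle is justifying that the bounded drift $\hat{\mu}_V$ does not affect the leading-order $T\log$ asymptotics. I would handle this by a Girsanov change of measure removing the drift: since $\hat{\sigma}_V \geq \omega >0$ and $|\hat{\mu}_V|\leq b$, the density $\frac{dQ}{dP}=\exp\!\bigl(-\int_0^T \frac{\hat{\mu}_V(v_s)}{\hat{\sigma}_V(v_s)}dW_s -\frac{1}{2}\int_0^T \frac{\hat{\mu}_V(v_s)^2}{\hat{\sigma}_V(v_s)^2}ds\bigr)$ has all $L^p$ moments bounded uniformly as $T\to 0$ (Novikov's condition is trivially satisfied), so by H\"older's inequality the extra factor contributes $O(1)$ to the call expectation and vanishes after multiplication by $T$ in the $T\log$ limit. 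Alternatively, the same conclusion follows from general LDP results for diffusions with bounded drift and uniformly elliptic diffusion coefficient, where the drift drops out of the Freidlin-Wentzell rate functional and only the volatility-induced geodesic distance survives. Combining the sandwich, the BBF limit, and this drift-removal argument yields the claimed asymptotics in the joint limit $\tau,T\to 0$.
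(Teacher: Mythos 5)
Your proof follows essentially the same route as the paper: the payoff sandwich from Proposition~\ref{prop:VIX:approx} converting the bounds on $\mathrm{VIX}_T$ into bounds on call/put prices at shifted strikes, the BBF short-maturity asymptotics \cite{BBFpaper} applied to the capped process to obtain $\lim_{T\to 0}T\log\mathbb{E}[(v_T-K')^+]=-J_V(K')$, and continuity of $K'\mapsto J_V(K')$ to send $\tau\to 0$. The only substantive difference is that you explicitly justify discarding the bounded drift $\hat{\mu}_V$ (via Girsanov/H\"older or the Freidlin--Wentzell rate functional), a step the paper leaves implicit when invoking the local-volatility asymptotics.
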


\begin{proof}
By Proposition~\ref{prop:VIX:approx} and short-maturity asymptotics for European options in the local volatility model \cite{BBFpaper}, 
we can compute that for any $\tau>0$ and $K>v_{0}$,
\begin{align*}
\limsup_{T\rightarrow 0}T\log C_{V}(K,T)
&=\limsup_{T\rightarrow 0}T\log\mathbb{E}[(\mathrm{VIX}_{T}-K)^{+}]
\\
&\leq\limsup_{T\rightarrow 0}T\log\mathbb{E}[(v_{T}e^{b\tau+\frac{1}{2}a^{2}\tau}-K)^{+}]
\\
&=\limsup_{T\rightarrow 0}T\log\mathbb{E}[(v_{T}-Ke^{-b\tau-\frac{1}{2}a^{2}\tau})^{+}]
=-J_{V}\left(Ke^{-b\tau-\frac{1}{2}a^{2}\tau}\right),
\end{align*}
where $J_{V}(\cdot)$ is defined in \eqref{J:V:eqn}.
Similarly, we can compute that for any $\tau>0$ and $K>v_{0}$,
\begin{align*}
\liminf_{T\rightarrow 0}T\log C_{V}(K,T)
&=\liminf_{T\rightarrow 0}T\log\mathbb{E}[(\mathrm{VIX}_{T}-K)^{+}]
\\
&\geq\liminf_{T\rightarrow 0}T\log\mathbb{E}[(v_{T}e^{-b\tau}-K)^{+}]
\\
&=\liminf_{T\rightarrow 0}T\log\mathbb{E}[(v_{T}-Ke^{b\tau})^{+}]
=-J_{V}\left(Ke^{b\tau}\right).
\end{align*}
Since it holds for any $\tau>0$, by letting $\tau\rightarrow 0$, 
we showed that \eqref{VIX:call:OTM} holds.
Similarly, we can show that \eqref{VIX:put:OTM} holds. 
This completes the proof.
\end{proof}

In particular, for OTM call options, i.e. $K>v_{0}$, we have
\begin{equation}
J_{V}(K)=
\begin{cases}
\frac{1}{2}\left(\int_{v_{0}}^{K}\frac{dz}{z\sigma_{V}(z)}\right)^{2} &\text{if $\hat{v}\geq K>v_{0}$},
\\
\frac{1}{2}\left(\int_{v_{0}}^{\hat{v}}\frac{dz}{z\sigma_{V}(z)}+\frac{\log(K/\hat{v})}{a}\right)^{2} &\text{if $K>\hat{v}>v_{0}$},
\\
\frac{1}{2}\left(\frac{\log(K/v_{0})}{a}\right)^{2} &\text{if $K>v_{0}\geq\hat{v}$},
\end{cases}
\end{equation}
and for OTM put options, i.e. $K<v_{0}$, we have
\begin{equation}
J_{V}(K)=
\begin{cases}
\frac{1}{2}\left(\frac{\log(K/v_{0})}{a}\right)^{2} &\text{if $\hat{v}\leq K<v_{0}$},
\\
\frac{1}{2}\left(\int_{K}^{\hat{v}}\frac{dz}{z\sigma_{V}(z)}+\frac{\log(v_{0}/\hat{v})}{a}\right)^{2} &\text{if $K<\hat{v}<v_{0}$},
\\
\frac{1}{2}\left(\int_{K}^{v_{0}}\frac{dz}{z\sigma_{V}(z)}\right)^{2}  &\text{if $K<v_{0}\leq\hat{v}$}.
\end{cases}
\end{equation}

We define the implied volatility of the VIX options $\sigma_{\mathrm{VIX}}(K,T)$ in the capped volatility model \eqref{vsde:modification} as
\begin{align}\label{sigVIX:def}
    C_V(K,T) &= e^{-rT } c_{\mathrm{BS}}(K,T;F_V(T,a),\sigma_{\mathrm{VIX}}(K,T))\,, \\
    P_V(K,T) &= e^{-rT } p_{\mathrm{BS}}(K,T;F_V(T,a),\sigma_{\mathrm{VIX}}(K,T))\,, \nonumber
\end{align}
where $c_{\mathrm{BS}}(K,T;F,\sigma)$ and $p_{\mathrm{BS}}(K,T;F,\sigma)$ are the undiscounted Black-Scholes option prices with forward $F$ and volatility $\sigma$ and $F_V(T,a) = \mathbb{E}[v_T]$.

The short-maturity option pricing result of Theorem \ref{thm:VIX:option} translates in the usual way into a short-maturity result for the VIX implied volatility. 
Assume that the cap $a$ is sufficiently large, such that $\hat{v}>K$ and $\hat{v}>v_{0}$. Then, the short-maturity asymptotics for the VIX call and put options is equivalent with the
short-maturity asymptotics of the VIX implied volatility
\begin{equation}\label{BBF}
\lim_{T\to 0} \sigma_{\mathrm{VIX}}(x,T) := \sigma_{\mathrm{VIX}}(x)=\frac{\log(K/v_0)}{\int_{v_0}^K \frac{dz}{z \hat{\sigma}_V(z)}} = \frac{\log(K/v_0)}{\int_{v_0}^K \frac{dz}{z \sigma_V(z)}}\,,
\end{equation}
where $x = \log(K/v_0)$ is the log-strike. 
We used here the result of Proposition~\ref{prop:FVT:smallT} to take the limit $\lim_{T\to 0} F_V(T,a)=v_0$ for any finite $a$. 

The integral in the denominator in \eqref{BBF} can be evaluated exactly with the result
\begin{equation}
\int_{v_0}^K \frac{dz}{z \sigma_V(z)} = \frac{1}{\omega}
\left\{\arctanh\left(\frac{\rho(\beta-1) v_0 + \omega}{\sigma_V(v_0)}\right)
-\arctanh\left(\frac{\rho (\beta-1) K + \omega}{\sigma_V(K)}\right) 
\right\} \,.
\end{equation}

The VIX implied volatility \eqref{BBF} can be expanded in log-strike
\begin{equation}
\sigma_{\mathrm{VIX}}(x) = \sigma_{\mathrm{VIX}}(0) + s_{\mathrm{VIX}}\cdot x + \frac12 \kappa_{\mathrm{VIX}}\cdot x^2 +O(x^3)\,,
\end{equation}
as $x\rightarrow 0$, where the ATM level, skew and convexity of the VIX implied volatility are
\begin{align}\label{atmVIX}
&\sigma_{\mathrm{VIX}}(0) := \sigma_V(v_0)\,,\\
\label{skewVIX}
&s_{\mathrm{VIX}} := v_0 \frac{d}{dv} \sigma_{\mathrm{VIX}}(v_0) = v_0 (\beta-1) \frac{\rho\omega + (\beta-1) v_0}{2\sigma_{V}(v_0)} \,,
\end{align}
and 
\begin{equation}\label{cvxVIX}
\kappa_{\mathrm{VIX}} := \frac{v_0 (\beta-1)}{3\sigma_V^4(v_0)}
\left( 2\omega^3 \rho + (\beta-1) \omega^2 (4 + \rho^2) v_0 +
4 (\beta-1)^2 \omega \rho v_0^2 + (\beta-1)^3 v_0^3 \right) \,.
\end{equation}

These results for the ATM level, skew and convexity are reproduced by Proposition~6.2 in \cite{VIXpaper} by taking $\eta(x) = x^{\beta-1}$ in that result\footnote{Explicitly, the results \eqref{atmVIX}, \eqref{skewVIX} and \eqref{cvxVIX} are reproduced by substituting  
$\sigma=2\omega, V_0 = v_0^2, \eta_1 = \beta-1$ and $\eta_2=\frac12 (\beta-1)^2$ in Proposition 6.2 of \cite{VIXpaper}.}
although, strictly speaking, they do not follow from Proposition 6.2 of \cite{VIXpaper} since the CEV local volatility function $\eta(x)$ does not satisfy the technical conditions required for its validity.

\begin{remark}\label{rmk:vix}
For $\beta<1$ and $\rho<0$, the VIX skew (\ref{skewVIX}) is positive, which agrees with the empirical evidence: the observed VIX smile is up-sloping. 
On the other hand, the VIX convexity (\ref{cvxVIX}) is positive,
which unfortunately disagrees with empirical evidence: the observed VIX smile is concave. This disfavors the SABR model as a realistic model for VIX smiles.
\end{remark}
\vspace{0.2cm}

\textbf{Numerical example.} We illustrate the theoretical results with numerical simulations of VIX options under the capped volatility process \eqref{vsde:modification}. We assume the following parameters
\begin{equation}\label{params}
\omega = 1.0\,,\quad v_0 = 0.1\,,\quad
a = 2.0\,,\quad b = 1.0\,.
\end{equation}

\begin{table}
\caption{\label{tab:1} 
Numerical values for the simulation of the capped volatility process with parameters (\ref{params}). $\hat v$ is given by \eqref{vhat} and the last column shows the MC estimate for the forward VIX $F_V(T,a)$ at $T=0.1$.}
\begin{center}
\begin{tabular}{|c|cc|}
\hline
$\rho$ & $\hat v$ & $F_V(0.1,a)$ \\
\hline\hline
$-0.7$ & 2.336 & $0.1003 \pm 0.0001$ \\
0.0 & 3.464 & $0.1001 \pm 0.0001$ \\
0.7 & 5.136 & $0.0998 \pm 0.0001$ \\
\hline
\end{tabular}
\end{center}
\end{table}

In Table~\ref{tab:1} we show the values of the $\hat v$ parameter at which the cap on $\hat \sigma_V(v)$ is reached. For all correlation values, this is much larger than the spot volatility $v_0$. 

The SDE for the capped volatility process $v_t$ in \eqref{vsde:modification} was simulated numerically using an Euler scheme with $n=100$ time steps and $N_{\mathrm{MC}}=100k$ MC paths. Table \ref{tab:1} shows the VIX forward prices $F_V(T,a)$ for $T=0.1$ for several values of the correlation parameter $\rho$ obtained from the MC simulation. These values are close to $v_0=0.1$.

Using this simulation we priced VIX options with maturity $T=0.1$. The option prices were converted to VIX implied volatility using the definition \eqref{sigVIX:def}. The results for $\sigma_{\mathrm{VIX}}$ are shown in Figure~\ref{Fig:MC} as the red dots with error bars, for three values of the correlation $\rho \in \{-0.7,0,+0.7\}$. The solid black curve in these plots shows the short-maturity asymptotic VIX volatility (\ref{BBF}).

From these plots we note reasonably good agreement of the asymptotic result with the numerical simulation within the errors of the Monte Carlo simulation. These plots also illustrate the main features of the VIX smile under the SABR model noted above: for negative correlation $\rho<0$ the VIX smile is increasing, which is in agreement with empirical data. However, for $\rho<0$ the smile is convex, which is different from the observed concave shape of this smile. This suggests that the SABR model may not allow a precise calibration to VIX options market data, although it can be useful as a simple approximation. A detailed empirical study of the ability of the LSV with log-normal volatility (of which SABR is a limiting case) to calibrate to market data will be left as a future research direction.

\begin{figure}
\centering
\includegraphics[width=2.0in]{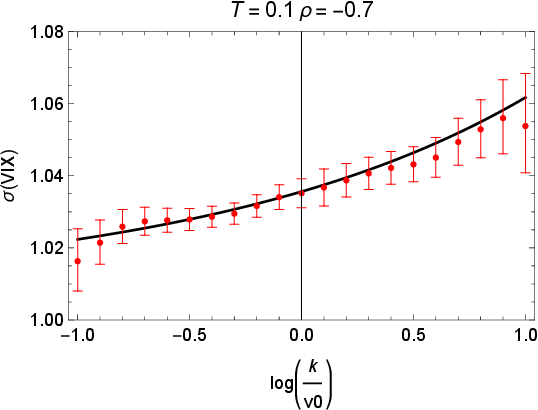}
\includegraphics[width=2.0in]{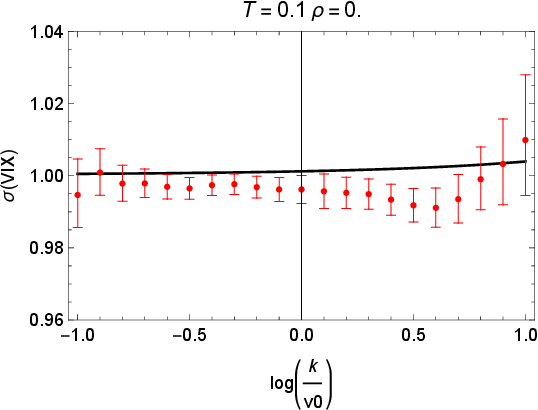}
\includegraphics[width=2.0in]{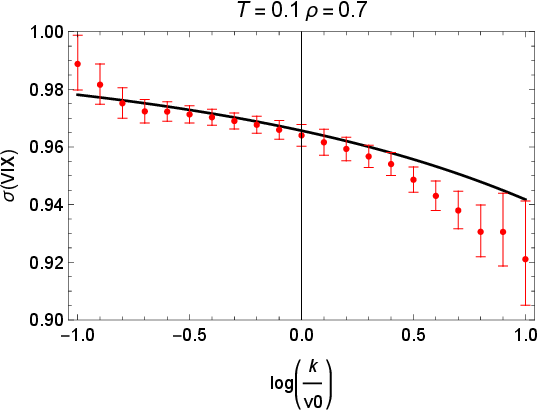}
\caption{The VIX implied volatility for the capped volatility model with parameters (\ref{params}) and three values of the correlation as shown. 
The red dots with error bars show the MC simulation and the black curve is the short maturity asymptotics (\ref{BBF}). The VIX options have maturity $T=0.1$.}
\label{Fig:MC}
\end{figure}

\section*{Acknowledgements}
We are grateful to an anonymous referee for helpful comments and suggestions. We would also like to thank Martin Forde and Alan Lewis 
for comments.
Lingjiong Zhu is partially supported by the grants NSF DMS-2053454, NSF DMS-2208303.

\appendix

\section{Proofs}

\begin{proof}[Proof of Proposition \ref{prop:p}]
Define the scale function 
\begin{equation}
p(x) = \int_0^x e^{-2F(y)} dy\,.
\end{equation}

i) The function $p$ is clearly increasing since $p'(x) = e^{-2F(x)} \geq 0$.
From (\ref{boundsF}), the scale function is bounded from above by the convergent integral
\begin{equation}
p(x) \leq \kappa \int_0^x 
\left( \frac{\omega^2}{R(y)} \right)^{\frac{\alpha}{(1-\beta)^2}} dy\,,
\end{equation}
where $\kappa$ is given in \eqref{kappa:defn}.
By the monotone convergence theorem, $p(x)$ converges to a finite limit $p_\infty$ since it 
is monotonically increasing and bounded from above.

ii) Large $x$ asymptotics. The upper and lower bounds on $p(x)$ are proportional to a common integral
\begin{equation}
I(x) := \int_0^x 
\left( \frac{\omega^2}{R(y)} \right)^{\frac{\alpha}{(1-\beta)^2}} dy\,.
\end{equation}
The large-$x$ asymptotics of this integral is
\begin{equation}
I(x) = c_0 - \frac{1}{1-\beta} \left( \frac{\omega}{1-\beta} \right)^{\frac{2-\beta}{1-\beta}}
\cdot \frac{1}{x^{\frac{1}{1-\beta}}}+o\left(x^{-1/(1-\beta)}\right)\,, \quad x\to \infty\,.
\end{equation}
The large $x$ asymptotics for $p(x)$ is the same, up to a multiplicative constant, which yields the result (\ref{largex}).

iii) Inverting the leading term asymptotics (\ref{largex}) of $p(x)$ for $x\to \infty$,  
gives the stated $y\to p_\infty$ asymptotics for $q(y)$.
\end{proof}

\begin{proof}[Proof of Proposition \ref{prop:explosion}]
Recall the Feller test for explosions of the solutions of a SDE. Define the function
\begin{equation}\label{nudef}
\nu(x) := \int_c^x p'(y) \int_c^y \frac{2dz}{p'(z) \sigma^2(z)} dy\,.
\end{equation}

\begin{theorem}[Feller’s (1952) Test of Explosion \cite{Feller1952}, Theorem~5.29 \cite{KS}]\label{thm:Feller}
Assume that the nondegeneracy \eqref{ND} and local integrability \eqref{LI} conditions
hold,  
and let $X_t$ be a weak solution in $I = (0,\infty )$ of the SDE \eqref{general:SDE} with nonrandom initial condition $X_0> 0$. Then $\mathbb{P}(\tau_{\infty}=\infty) = 1$ or
$\mathbb{P}(\tau_{\infty}=\infty) < 1$, according to whether $\nu(0+) = \nu(\infty -) = \infty$ or not, where $\tau_{\infty} = \sup\{
t\geq 0 : X_t < \infty\}$.
\end{theorem}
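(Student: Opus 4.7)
The final theorem is Feller's classical test of explosion, quoted from Feller (1952) and Karatzas-Shreve. The plan is to sketch the standard proof via reduction to natural scale, an auxiliary function solving a specific Poisson-type ODE, and optional stopping at exit times from bounded intervals.

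First, I would pass to the natural scale by setting $Y_t = p(X_t)$, where $p$ is the scale function defined in \eqref{pdef}. By It\^o's formula and the fact that $p'' / p' = -2b/\sigma^2$, the process $Y_t$ is a driftless local martingale satisfying $dY_t = \bar\sigma(Y_t) dW_t$, taking values in $(p(0+), p(\infty-))$. The explosion time $\tau_\infty$ of $X_t$ coincides with the first time $Y_t$ exits its natural range. By Dambis-Dubins-Schwarz, $Y$ is a time-changed Brownian motion, so the qualitative question of explosion is equivalent to whether this time change reaches an endpoint in finite real time, which is quantified by an integrability condition on $\bar\sigma$ near the boundary.

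Second, I would verify that the auxiliary function $\nu$ in \eqref{nudef} satisfies the Poisson equation $\mathcal{L}\nu = 1$ for the infinitesimal generator $\mathcal{L} = \tfrac{1}{2}\sigma^2\partial_x^2 + b\partial_x$. Differentiating \eqref{nudef} twice and using $p'(x) = e^{-2F(x)}$ gives exactly $\tfrac{1}{2}\sigma^2(x)\nu''(x) + b(x)\nu'(x) = 1$. Therefore, by It\^o, $M_t := \nu(X_{t\wedge\tau_\infty}) - (t\wedge\tau_\infty)$ is a local martingale.

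Third, I would introduce a localizing sequence $\tau_n = \inf\{t \geq 0 : X_t \notin [a_n,b_n]\}$ with $a_n \downarrow 0$, $b_n \uparrow \infty$, so $\tau_n \uparrow \tau_\infty$. On $[a_n,b_n]$ both $p$ and $\nu$ are bounded, so optional stopping is justified for $M_t$ and for the bounded martingale $p(X_{t\wedge\tau_n})$. Combining the resulting identities with the standard exit-probability formula $\mathbb{P}_x(X_{\tau_n} = b_n) = (p(x)-p(a_n))/(p(b_n)-p(a_n))$ yields
\begin{equation}
\mathbb{E}_x[\tau_n] = \frac{p(b_n)-p(x)}{p(b_n)-p(a_n)}\,\nu(a_n) + \frac{p(x)-p(a_n)}{p(b_n)-p(a_n)}\,\nu(b_n) - \nu(x).
\end{equation}
If $\nu(0+) = \nu(\infty-) = \infty$, every convex combination on the right-hand side diverges, forcing $\mathbb{E}_x[\tau_n] \to \infty$; with an additional uniform integrability argument (using that $\nu(X_{\tau_n})$ concentrates on the boundary where $\nu$ is infinite) one upgrades this to $\mathbb{P}_x(\tau_\infty = \infty) = 1$. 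Conversely, if $\nu(0+) \wedge \nu(\infty-) < \infty$, choosing the localizing boundary on that side produces a bounded local martingale $\nu(X_{t\wedge\tau_n}) - (t\wedge\tau_n)$, from which $\mathbb{E}_x[\tau_\infty] < \infty$ and in particular $\mathbb{P}_x(\tau_\infty < \infty) > 0$.

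The hard part will be the fourth step, namely the passage to the limit $n \to \infty$ and the dichotomy conclusion: showing rigorously that $\mathbb{E}_x[\tau_n] \to \infty$ implies almost sure non-explosion (rather than merely the weaker statement that the expected explosion time is infinite), and handling the asymmetric case where only one of $\nu(0+), \nu(\infty-)$ is finite so that the process might explode only through the accessible boundary. The cleanest route here is to argue via a bounded supermartingale of the form $f(X_t)$, where $f$ solves $\mathcal{L}f = -f$ and is positive, finite, and tends to zero at an inaccessible endpoint; applying the martingale convergence theorem to $e^{-t} f(X_t)$ converts the dichotomy for $\nu$ into the corresponding dichotomy for $\mathbb{P}_x(\tau_\infty < \infty)$.
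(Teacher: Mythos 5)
First, note that the paper does not prove this statement at all: it is quoted verbatim as a classical result, with the proof delegated to Feller (1952) and Theorem~5.29 of Karatzas--Shreve. So the only meaningful comparison is between your sketch and the standard textbook proof. Your first three steps reproduce that proof's scaffolding correctly: the reduction to natural scale, the verification that $\nu$ in \eqref{nudef} solves $\tfrac12\sigma^2\nu''+b\nu'=1$ (modulo the technical point that under \eqref{LI} alone $\nu$ is only $C^1$ with absolutely continuous derivative, so one needs the generalized It\^o rule), and the exit-time identity $\mathbb{E}_x[\tau_n]=\frac{p(b_n)-p(x)}{p(b_n)-p(a_n)}\nu(a_n)+\frac{p(x)-p(a_n)}{p(b_n)-p(a_n)}\nu(b_n)-\nu(x)$ obtained by optional stopping on $[a_n,b_n]$. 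The DDS time-change remark in your first step is a detour (it belongs to the Engelbert--Schmidt route, which you do not pursue), but it does no harm.

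The genuine gaps are in the fourth step, which is precisely where the content of Feller's test lives. (1) Your converse argument is wrong as stated: if, say, $\nu(\infty-)<\infty$ but $\nu(0+)=\infty$, the process $\nu(X_{t\wedge\tau_n})-(t\wedge\tau_n)$ is \emph{not} bounded (the path can approach $0$, where $\nu$ blows up), and the conclusion $\mathbb{E}_x[\tau_\infty]<\infty$ is in general false --- the theorem only asserts $\mathbb{P}(\tau_\infty=\infty)<1$, and the expected explosion time can be infinite. (2) The auxiliary function you propose for the dichotomy has the wrong equation and the wrong boundary behavior: the standard construction (KS Lemma~5.26) builds $u$ with $\mathcal{L}u=u$, satisfying $1+\nu\le u\le e^{\nu}$, so that $e^{-t}u(X_t)$ is a nonnegative local martingale and $u$ \emph{blows up} (rather than vanishing) at any endpoint where $\nu$ is infinite; with $\mathcal{L}f=-f$ the process $e^{-t}f(X_t)$ acquires a strictly negative drift $-2e^{-t}f$ everywhere and cannot isolate the boundary behavior. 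The passage from $\mathbb{E}_x[\tau_n]\to\infty$ to $\mathbb{P}_x(\tau_\infty=\infty)=1$, and the positive-probability explosion in the asymmetric case, both hinge on the comparison $1+\nu\le u\le e^{\nu}$ and on $\mathbb{E}_x[e^{-(t\wedge S_n)}u(X_{t\wedge S_n})]=u(x)$; without that your sketch establishes only the (strictly weaker) statement about expected exit times.
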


We prove an upper bound on the function $\nu(x)$, and will show that it is finite for all $x$. 
Using $p'(y) = e^{-2F(y)}$ and the bounds (\ref{boundsF}) on $e^{-2F}$ we have
\begin{equation}\label{nubound}
\nu(x) \leq \kappa \int_0^x \left( \frac{\omega^2}{R(y)}\right)^{\alpha/(1-\beta)^2}
\left( \int_0^y \left( \frac{\omega^2}{R(z)}\right)^{-\alpha/(1-\beta)^2}\frac{2dz}{z^2 R(z)} 
\right) dy\,.
\end{equation}

Recall from \eqref{R:eqn} that $R(y) = (1-\beta)^2 y^2 + 2\bar\omega (1-\beta) y + \omega^2$ 
and 
\begin{equation}
\frac{\alpha}{(1-\beta)^2} = \frac{2-\beta}{2(1-\beta)} \geq 0\,,\qquad\mbox{ for } 0 \leq \beta < 1\,.
\end{equation}

Denote the $z$ integral 
\begin{equation}
I_z(y) := \int_0^y
\left( \frac{\omega^2}{R(z)}\right)^{-\frac{\alpha}{(1-\beta)^2}} \frac{2dz}{z^2 R(z)} \,.
\end{equation}
The integrand has the large $z$ asymptotics $\sim \frac{2}{\omega^2} 
z^{-4 +\frac{2\alpha}{(1-\beta)^2} } = \frac{2}{\omega^2} 
z^{-\frac{2-3\beta}{1-\beta}} $ as $z\to \infty$. This gives the large $y$ asymptotics
\begin{equation}
I_z(y) = 
\begin{cases}
\mbox{const } + \frac{2(1-\beta)}{\omega^2(2\beta-1)} 
y^{\frac{2\beta-1}{1-\beta}}\,, & \beta\neq \frac12\,, \\
\mbox{const } + \frac{2}{\omega^2} \log y\,, & \beta = \frac12\,.
\end{cases}
\end{equation}

We distinguish between two cases:

i) $\beta=\frac12$. For this case the integral $I_z(y)$ has logarithmic growth as $y\to \infty$.
The bound (\ref{nubound})
has the large $x$ asymptotics, i.e. there exists some $\kappa'>0$ such that
\begin{equation}
\nu(x) \leq \kappa \int_0^x \left( \frac{\omega^2}{R(y)}\right)^{\alpha/(1-\beta)^2}I_z(y) dy 
\leq\kappa'
\int_{1}^x \frac{\log y}{y^{\frac{2-\beta}{1-\beta}}} dy \,,
\end{equation}
for any sufficiently large $x$, where $\kappa$ is given in \eqref{kappa:defn} and the integral is
\begin{equation}
\int_{1}^x \frac{\log y}{y^{\frac{2-\beta}{1-\beta}}} dy 
= \mbox{const }
- (1-\beta) \frac{\log x + (1-\beta)}{x^{\frac{1}{1-\beta}}}\,,
\end{equation}
which is bounded as $x\to \infty$.

We conclude that for this case $\nu(x)<\infty$ for all $x$.

ii) $\beta \neq \frac12$. For this case the integrand in the upper bound (\ref{nubound}) on 
$\nu(x)$ has the large $y$ asymptotic form
\begin{equation}
\frac{1}{y^{\frac{2-\beta}{1-\beta}}} \left( \mbox{ const } + \frac{1}{2\beta-1} \cdot
\frac{1}{y^\frac{1-2\beta}{1-\beta}} \right)\,,\quad y \to \infty\,.
\end{equation}
Consider the two cases of $\beta<\frac12 $ and $\beta> \frac12$ separately.

a) For $\beta<\frac12$ the second term in the brackets 
can be neglected relative to the constant. 
This gives $\nu(x) \leq \nu_0 - \nu_1 \frac{1}{x^{\frac{1}{1-\beta}}}$ as $x\to \infty$ with $\nu_{0},\nu_{1}>0$ which is thus bounded as $x\to \infty$.

b) For $\beta > \frac12$ the second term in the brackets dominates over the first term, 
and the 
integrand has the form $\frac{dy}{y^3}$. The large $x$ asymptotics of the integral giving the upper bound on $\nu(x)$ is
$\nu(x) \leq \nu_0 - \nu_1 \frac{1}{x^2}$, which has a finite limit as $x\to \infty$.

For both cases, the function $\nu(x)$ is bounded from above by a finite value,
which implies $\nu(x) <\infty$ for all $x$. By the Feller test of explosion (Theorem~\ref{thm:Feller}), this
implies that the process $v_t$ explodes in finite time with non-zero probability.
\end{proof}

\begin{proof}[Proof of Proposition \ref{prop:martingale}]
The proof uses the following result, due to Sin \cite{Sin}: consider the auxiliary volatility process
\begin{equation}\label{aux}
d\hat \sigma_t = b(\hat \sigma_t) dt + a(\hat \sigma_t) d\hat W_t\,,
\end{equation}
with
\begin{equation}
b(v) = \beta v^2 \Big[\frac12 (\beta-1) v + \rho \Big]\,,\quad
a(v) = v \sigma_V(v) \,.
\end{equation}
Then the asset price $S_t$ is a true martingale if and only if $\hat \sigma_t$ does not explode in finite time.

Define the scale function for the process \eqref{aux}
\begin{equation}\label{phatdef}
\hat p(x) = \int_0^x e^{-2\int_0^\xi \frac{b(y)}{a^2(y)} dy} d\xi\,.
\end{equation}
A sufficient condition for the absence of explosion of $\hat \sigma_t$ is $\hat p(\pm \infty) = \pm \infty$, see p.~382 in \cite{KS}.
We will prove in the following that this holds indeed. 

The integral in the exponent of \eqref{phatdef} is
\begin{equation}
\hat F(\xi) := -2 \int_0^\xi \frac{b(y)}{a^2(y)} dy = 2\beta \int_0^\xi 
\frac{\frac12 (1-\beta) y - \rho}{\sigma_V^2(y)} dy\,.
\end{equation}
As noted, for $|\rho | \neq 1$, the denominator $\sigma_V^2(y)$ does not vanish for all real $y$.  

Following the same approach as in Section~\ref{sec:2.2},
we have the lower and upper bounds on this integral
\begin{equation}\label{two:bounds}
\kappa_- \left( \frac{\sigma_V^2(\xi)}{\omega^2} \right)^{\frac{\beta}{2(1-\beta)}} < 
e^{\hat F(\xi)} <
\kappa_+ \left( \frac{\sigma_V^2(\xi)}{\omega^2} \right)^{\frac{\beta}{2(1-\beta)}}\,,
\end{equation}
with $\kappa_\pm$ finite constants.
Substituting \eqref{two:bounds} into \eqref{phatdef} we get the lower bound on the scale function
\begin{equation}
\hat p(x) \geq \kappa_+ \int_0^x \left( \frac{\sigma_V^2(\xi)}{\omega^2} \right)^{\frac{\beta}{2(1-\beta)}} d\xi 
\to +\infty\mbox{ as } x \to +\infty\,,
\end{equation}
and the lower bound
\begin{equation}
\hat p(x) \leq \kappa_- \int_0^x \left( \frac{\sigma_V^2(\xi)}{\omega^2} \right)^{\frac{\beta}{2(1-\beta)}} d\xi \to -\infty
\mbox{ as }
 x \to -\infty\,.
\end{equation}
Thus for any $\beta<1$, we have $\hat p(\pm \infty) = \pm \infty$
which proves the stated claim. 
\end{proof}
\bibliographystyle{plain}
\bibliography{VIXSABR}

\end{document}